\newenvironment{proof}{\par\smallskip\noindent\textbf{Proof:}}{\par\medskip}
\newtheorem{theorem}{Theorem}
\newtheorem{proposition}[theorem]{Proposition}
\def\IC{\relax{\it  l\kern-.50 em C}}
\def\IE{\relax{\it  l\kern-.12 em E}}
\def\IK{\relax{\it  l\kern-.18 em K}}
\def\IL{\relax{\it  I\kern-.18 em L}}
\def\IN{\relax{\it  I\kern-.18 em N}}
\def\IR{\relax{\it  I\kern-.18 em R}}
\def\<#1>{\langle#1\rangle}
\def\d<#1>{\langle\!\langle#1\rangle\!\rangle}
\def\D<#1>{\left\langle\!\!\left\langle#1\right\rangle\!\!\right\rangle}
\font\tenfrak=eufm10  \font\sevenfrak=eufm7  \font\fivefrak=eufm5
\newtheorem{example}{Example}
\def\dfrac#1#2{{\displaystyle{\frac{#1}{#2}}}}
\def\pd#1#2{\frac{\partial #1}{\partial#2}}
\begin{document}

\title{Conformal Killing vector fields and a  virial theorem}
\author{
Jos\'e F. Cari\~nena$^{a)}$, Irina Gheorghiu$^{a)}$, Eduardo\ Mart\'{\i}nez$^{b)}$ and Patr\'{\i}cia Santos$^{c)}$
\\[4pt]
$^a$Department of Theoretical Physics, Univ. of Zaragoza, Spain\\
$^b$IUMA and Department of Appl. Math., Univ. of Zaragoza, Spain\\ 
$^c$CMUC, Univ. of Coimbra, and  Polytech. Inst.  of Coimbra, ISEC, Portugal 
}
\date{}
\maketitle


\begin{abstract}
The virial theorem is formulated both intrinsically and in local coordinates for a Lagrangian system of mechanical type on a Riemann manifold. An import case studied in this paper is  that of an affine virial function associated to a vector field on the configuration manifold. The special cases of a virial function associated to a Killing, a homothetic and a conformal Killing vector field are considered and the corresponding virial theorems are established for this type of functions.
\end{abstract}
\begin{quote}
{\sl Keywords:}{\enskip} virial theorem; Hamiltonian systems;  Symplectic manifolds; 
Canonical transformations.

{\sl Running title:}{\enskip}
Conformal vector fields and a  virial theorem.

{\it MSC Classification:}
{\enskip}37J05, {\enskip}70H05, {\enskip}70G45


{\it PACS numbers:}
 {\enskip}02.40.Yy, {\enskip} 45.20.Jj, {\enskip}

\end{quote}
\section{Introduction}  Since the establishment of the so called virial theorem in 1870 its usefulness and range of applicability have been increasing almost continuously 
till our days. It was  stated by Clausius in the assertion {\it The mean vis viva of the system is equal to its virial}
where {\it vis viva} integral is the total kinetic energy of the system
and the latin word {\it virias}  was used by Clausius to denote
the scalar quantity represented in terms of the forces ${\bf F}_i$ acting on the system as  
$$\frac 12 \d<\sum_{i}{\bf F}_i\cdot {\bf r}_i>,
$$
and it was  shown to be one half of the averaged potential energy of the
 system. 

The important point is the  wide range of applicability of the virial theorem, 
as  it is applicable to dynamical and thermodynamical systems,  it can also be formulated to deal with relativistic (in the sense of special relativity) systems, 
it is applicable to systems with velocity dependent forces and viscous systems, and even if  it provides less information that the equations of motion,  it is simpler to apply and then it can provide information concerning systems whose complete analysis may defy description. 
For instance, in astronomy, the virial theorem finds applications in the theory of dust and gas of interstellar space as well as cosmological considerations of the universe as a whole and in other  discussions concerning the stability of clusters, galaxies and clusters of galaxies. For an excellent   historical account one can see \cite{GC78}.

 In one-particle Newtonian mechanics of a  particle of  mass $m$ under the action of a force ${\bf F}$ the virial  function introduced by Clausius is
 $G({\bf x}, \dot {\bf x})=m\, {\bf x}\cdot \dot{\bf x}$,
 and one can show  using Newton second law that ${dG}/{dt}=m\,\dot {\bf x}\cdot \dot{\bf x}+{\bf x}\cdot {\bf F}$, and when integrating this expression 
 between $t=0$ and $t=\tau$, dividing by the total time interval $\tau$ and taking the limit of $\tau $ going to infinity we find that  if the possible values of $G$ 
 are bounded then  $\d<2\, T( \dot {\bf x})+ {\bf x}\cdot {\bf F}>=0$. In the particular case of a conservative force, ${\bf F}=-\boldsymbol{\nabla}V$, 
 $\d<2\, T( \dot {\bf x})- {\bf x}\cdot \boldsymbol{\nabla}V>=0$. When 
  the potential  $V$ is homogeneous of degree $k$, Euler's theorem of homogeneous functions implies that ${\bf x}\cdot \boldsymbol{\nabla}V=k\, V$, and therefore,
 $\d<2\, T( \dot {\bf x})-k\,V({\bf x})>=0$, 
 i.e. if $E$ is the total energy,  
 $$
\d<T( \dot {\bf x})>=\frac{k\,E}{k+2},\qquad \d<V({\bf x})>=\frac{2\,E}{k+2}.$$

Remark that the existence of the time average of a function depends on the evolution curve and therefore on the initial conditions. The assumption that the function remains bounded guarantees that such an  
average does exist whatever the evolution curve be, and, moreover, when the motion is periodic the average coincides with the average in a time period. On the other side the relation $\d<A+B>=\d<A>+\d<B>$
holds when the three averages do exist. In particular, expressions as $\d<A-B>=0$ imply $\d<A>=\d<B>$ when $\d<A>$ does exist.

Relevant questions about this result are: Where does the virial function $G$ comes from?
 Why the relation is simpler for power law potentials? Why is the reason for the values of the coefficients? 
 Is there any generalisation? and, of course,  What about a quantum mechanical counterpart? The answer to all these questions rests on the geometrical 
 interpretation of the virial theorem for dynamical systems, in particular for systems  defined by regular Lagrangians,  (see e.g. \cite{CFR12} and references therein). 
  The standard virial theorem  is based on the transformation properties of kinetic and potential
energy under dilations, therefore is only valid for systems with $\mathbb{R}^n$ as configuration space.
In order to generalise the virial theorem for other system we should use the tools of geometric mechanics.  
First the problem was analysed in the framework of Hamiltonian dynamical systems, and therefore for systems described by a regular Lagrangian.
But  the possibility of establishing a virial-like relation in the case in which we have a vector field $X$ which is a complete lift of a vector field on the base manifold such that $XL=a\,L$ was also studied in \cite{CFR13}, 
as well as the even  more general case 
of vector fields whose flows are non-strictly canonical transformations.   The virial-like relations we obtain are more general than the standard ones and some of them have been used in tensor
 virial theorems or the so-called hypervirial theorems \cite{Hi79}.

It is well known the equivalence of Lagrangian and Hamiltonian formalisms in the regular case. Actually the geometric approach was first developed in the Hamiltonian
 formalism in the framework of 
symplectic geometry in phase spaces (i.e. cotangent bundles) and then Legendre transformation was used to translate the symplectic structure to the Lagrangian formalism. However,
 it was soon proved that one can develop the formalism in the framework of tangent bundle geometry by using the geometric tensors characterizing tangent bundle structures, 
 the vertical endomorphism and the Liouville vector field \cite{Go,Kl,Cr81,Cr83}. As virial-like relations can be directly established in terms of the Lagrangian function and are
  not so easily derivable in the Hamiltonian formalism we will mainly restrict ourselves to the Lagrangian formalism, even if the final expressions can be translated to
  the Hamiltonian language. The extension of some of such results to the framework of mechanics in Lie algebroids was developed in 
\cite{CGMS14}.

This paper tries to develop analogous results in the particular case of mechanical type Lagrangians, and in this case conformal Killing vector fields 
will be shown to play a very relevant role. For mechanical systems, $L=T_g-V$, finding infinitesimal symmetries of the metric, i.e. Killing fields, is relatively easy. 
As it is well known, if such a vector field is also a symmetry of the potential we get a constant of the motion, which simplifies the problem. 
If the Killing vector field is not a symmetry of the potential 
  the Virial Theorem provides relevant information, namely the average value of the derivative of the potential vanishes. With more generality, for a homothetic or a conformal Killing vector field the Virial theorem allows us to establish relations between the averages of the kinetic energy and those of certain
 derivatives of the potential. 
  
The paper is organized as follows. In Section 2, some geometrical concepts about Riemann structures are recalled and relevant expressions for tensor fields and functions are written in generalised coordinates. In Section 3, the virial theorem is presented for Lagrangian systems of mechanical type, both in intrinsic form and in terms of local coordinates, and a spherical geometry problem is analysed using this approach. In Section 4, we consider an important particular case of an affine on the velocities virial function, associated to a vector field on the configuration  manifold, more specifically, when the vector field is either a Killing, a homothetic or a conformal Killing vector field. Several examples are used to illustrate the theory.
In the last section we make some final comments about the results presented in the paper.

\section{Riemann structures and mechanical type Lagrangians}
Let $(M,g)$ be a (pseudo-)Riemann manifold, i.e. $g$ is a non-degenerate  symmetric two times covariant tensor field on $M$. Nondegeneracy means that the map $\widehat g:TM\to T^*M$ from the tangent bundle $\tau_M:TM\to  M$ to the cotangent bundle $\pi_M:T^*M\to M$,  defined by $\<\widehat g(v),w>=g(v,w)$, where  $v,w\in T_xM$, is regular. The map $\widehat g$ is a fibred map over the identity on $M$ and induces the corresponding map between the spaces of sections of the tangent and cotangent bundles, to be denoted by the same letter $\widehat g:\mathfrak{X}(M)\to \Omega^1(M)$: $\<\widehat g(X),Y>=g(X,Y)$.

A diffeomorphism $F:M\to M$ induces a new (pseudo-)Riemann structure $F^*g$ on $M$. Such transformation $F$ is called a conformal symmetry when there exists a function $f\in C^\infty(M)$ such that $F^*g=f\,g$. In particular when $f$ is a constant (different from one) $F$ is said to be a (proper) homothethy and, finally, when $F^*g=g$, the map  $F$ is called isometry. In the infinitesimal approach we say that a vector field  $X\in \mathfrak{X}(M)$ is either a conformal,   a homothetic, or a Killing vector field,
 when its flow $\phi_t$ is made of 
conformal maps, homothethies or isometries, respectively:
$$
\begin{array}{ll}{\rm conformal\ vector\ field:}&\mathcal{L}_Xg=f\, g,\quad  f\in C^\infty(M),\\
{\rm homothetic\ vector\ field:}&\mathcal{L}_Xg=\lambda\, g,\qquad \lambda\in \mathbb{R},\\
{\rm Killing\ vector\ field:}&\mathcal{L}_Xg=0. \end{array}
$$
Proper conformal vector fields are those vector fields for  which the conformal factor  $f$ is non constant and similarly a proper  
homothetic\ vector\ field is when $\lambda\ne 0$. Using the well known property $\mathcal{L}_X\circ \mathcal{L}_Y-\mathcal{L}_Y\circ \mathcal{L}_X=\mathcal{L}_{[X,Y]}$ one sees that the set of conformal vector fields is a Lie algebra and those of homothetic and Killing vector fields are subalgebras. For more details see e.g. \cite{GH88,GH00,TPBC,MMM}.

Given a symmetric covariant 2-tensor field $K$ in $M$ we denote by $T_K\in C^\infty(TM)$ the function 
\[
T_K(v)=\frac{1}{2}K(v,v),\qquad v\in TM.
\]
This rule identifies symmetric covariant  2-tensor fields with quadratic homogeneous  functions on the fibre coordinates. In particular when $g$ is a Riemann structure in $M$, 
$$
T_g(v)=\frac{1}{2}g(v,v),\qquad v\in TM,
$$
is the kinetic energy defined by the metric.

Given a local chart $(U,q^1,\ldots,q^n)$ on $M$ we can  consider the coordinate basis of $\mathfrak{X}(U)$ usually denoted $\{\partial/\partial q^j\mid j=1,\ldots ,n\}$ and the dual basis for $\Omega^1(U)$, $\{dq^j\mid j=1,\ldots ,n\}$. Then a  vector in a point $q\in U$  is $v=v^j\,(\partial/\partial q^j)_q$ and a covector  is $\zeta=p_j\,(dq^j)_q$, with  $v^j=\<dq^j,v>$ and $p_j=\<\zeta,\partial/\partial q^j>$ being the usual velocities and momenta. The local expression for 
$g$ is 
\begin{equation}
g=g_{ij}(q)\,dq^i\otimes dq^j.\label{Rstr}
\end{equation}

We can define Lagrangians of mechanical type for systems with configuration space  $M$, $L \in C^\infty(TM)$, by choosing a (pseudo-)Riemann structure  $g$ on $M$ and a potential function $V\in C^\infty(M)$ as follows:
\begin{equation}
L_{g,V}(q,v)=\frac 12 \,g_q(v,v)-(\tau_M^*V)(q,v)=\frac 12 \,g_q(v,v)-V(q),\label{mechL}
\end{equation}
i.e. the Lagrangian function is of the form $L_{g,V}=T_g-\tau_M^* V$, where the function $T_g\in C^\infty (TM)$  represents the kinetic  energy given above which can be rewritten as   
$$
T_g=\frac{1}{2}\, g(T\tau_M\circ D,T\tau_M\circ D),
$$
with  $D$ being  any second order differential equation vector field, i.e. a vector field on $TM$ such that $\tau_{TM}\circ D={\rm id}_{TM}$, 
while  the potential energy $\widetilde V=\tau_M^*V$ is a  basic function, i.e. the pull-back of  a smooth function $V$ on the base manifold $M$.

Given  a  Riemann structure $g$ on a manifold $M$ with   local expression   in a local chart (\ref{Rstr}),
the expression for the corresponding (free, i.e. $V=0$) Lagrangian, i.e. the function $T_g$,  is 
\begin{equation}
T_g(q,v)=\frac 12\, g_{ij}(q)\,v^iv^j ,\label{Lfree}
\end{equation}
while the coordinate expression of an arbitrary  second order vector field is 
\begin{equation}
D(q,v) =v^i\, \pd{}{q^i}+f^i(q,v)\pd{}{v^i}\, .\label{sodefield}
\end{equation}

Given  a vector field on $M$,
\begin{equation}
X=X^i(q)\, \pd{}{q^i}\in\mathfrak{X}(M),\label{vfinM}
\end{equation}
the Lie derivative with respect to the vector field $X$ of the metric tensor field $g$ is 
$$\mathcal{L}_Xg=X^k\,\pd{g_{ij}}{q^k} \, dq^i\otimes dq^j+g_{ij}\left(\pd {X^i}{q^k}\, dq^k\otimes dq^j+\pd{X^j}{q^k}dq^i\otimes dq^k\right),
$$
or using the symmetry property of the metric tensor field,
\begin{equation}
\mathcal{L}_Xg=\left(X^k\,\pd{g_{ij}}{q^k} +g_{ik}\,\pd{X^k}{q^j}+g_{jk}\,\pd{X^k}{q^i}\right) dq^i\otimes dq^j,\label{LXg}
\end{equation}
and then the condition for $X$ 
to be a Killing vector field, i.e. $\mathcal{L}_Xg=0$, is written in the above mentioned local coordinates as 
$$
\left(X^k\,\pd{g_{ij}}{q^k} +g_{ik}\,\pd{X^k}{q^j}+g_{jk}\,\pd{X^k}{q^i}\right)\, dq^i\otimes dq^j=0.
$$

Therefore, the set of conditions for the vector field $X\in\mathfrak{X}(M)$ given by (\ref{vfinM}) to be a Killing symmetry are:
\begin{equation}
X^k\,\pd{g_{ij}}{q^k} +g_{ik}\,\pd{X^k}{q^j}+g_{jk}\,\pd{X^k}{q^i}=0, \qquad i,j=1,\ldots, n.\label{Killingcond}
\end{equation}

Consider now the complete lift $X^c\in \mathfrak{X}(TM)$ with flow $T\phi_t$, where $\phi_t$ is the flow of the vector field $X\in\mathfrak{X}(M)$ with local expression (\ref{vfinM}).   Then the  local  coordinate expression of $X^c$  is 
$$
X^c(q,v)=X^i(q)\, \pd{}{q^i}+v^j\,\pd{X^i}{q^j}(q,v)\,\pd{}{v^i}=X^i(q)\, \pd{}{q^i}+(DX^i)(q,v)\pd{}{v^i}, 
$$
for any second order differential equation vector field $D$. 

For a 1-form $\alpha$ on $M$ we denote by $\widehat{\alpha}$ the associated linear function on $TM$ given by $\widehat{\alpha}(v)=\langle\alpha_{\tau_M(v)},v\rangle$, for $v\in TM$. In local tangent bundle coordinates, if $\alpha=\alpha_i(q)\, dq^i$, the function $\widehat{\alpha}$ is $\widehat{\alpha}(q,v)=\alpha_i(q)\, v^i$. In particular, for an exact 1-form $\alpha=df$ the associated linear function is $\widehat{df}(q,v)=v^i(\partial f/\partial q^i)_q$, i.e.  $\widehat{df}$ looks like the total derivative of the function $f$, and we denote $\dot{f}=\widehat{df}$, which can also be obtained by $\dot{f}=\mathcal{L}_{D}(\tau_M^*f)$ for an arbitrary second order differential equation vector field $D$. Complete lifts are determined by the action on this kind of functions: given a vector field $X$ on $M$ its complete lift $X^c$ is the only vector field on $TM$ which satisfies 
\begin{equation}
\mathcal{L}_{X^c}\widehat{\alpha}=\widehat{\mathcal{L}_X\alpha}\label{LXca}
\end{equation}
for every 1-form $\alpha$ on $M$. It is clear that the above  condition determines a vector field on $TM$. Let us show that the complete lift satisfies such a relation. If $\phi_t$ is the flow of $X$ then the flow of $X^c$ is $T\phi_t$, so that for $v\in TM$, with $q=\tau_M(v)$, we have
\begin{align*}
(\mathcal{L}_{X^c}\widehat{\alpha})(v)
&=\frac{d}{dt}\widehat{\alpha}(T\phi_t(v))\big|_{t=0}=\frac{d}{dt}\langle\alpha_{\tau_M(T\phi_t(v))}, T\phi_t(v)\rangle \big|_{t=0}\\
&=\frac{d}{dt}\langle\alpha_{\phi_t(q)}, T\phi_t(v)\rangle \big|_{t=0}=\frac{d}{dt}\langle(\phi_t^*\alpha)_q, v\rangle \big|_{t=0}\\
&=\langle(\mathcal{L}_X\alpha)_q, v\rangle=\widehat{\mathcal{L}_X\alpha}(v).
\end{align*}
In particular, for $\alpha=df$ we have $\mathcal{L}_{X^c}\dot{f}=(\mathcal{L}_Xf)\,\dot{}$. 

A remarkable property to be used later on is that for a  given a vector field  $X\in\mathfrak{X}(M)$,  $[X^c,D]$ is  a vertical vector field in $TM$ for any  second order differential equation vector field $D$, because $X^cD(q^i)=D(X^i)=DX^c(q^i)=v^k\partial X^i/\partial q^k$. The preceding  property (\ref{LXca}) can also  be used to give an intrinsic proof  as follows. 
Indeed, the action on basic functions is
\begin{align*}
\mathcal{L}_{[D,X^c]}(\tau_M^*f)
&=\mathcal{L}_{D}\mathcal{L}_{X^c}(\tau_M^*f)-\mathcal{L}_{X^c}\mathcal{L}_{D}(\tau_M^*f)\\
&=\mathcal{L}_{D}(\tau_M^*\mathcal{L}_{X}f)-\mathcal{L}_{X^c}\dot{f}=(\mathcal{L}_{X}f)\dot{}\,-\mathcal{L}_{X^c}\,\dot{f}=0
\end{align*}
from where it follows that $[X^c,D]$ is vertical.

One of the more important properties of complete lifts is the following relationship:
\begin{equation}
X^cT_g=T_{\mathcal{L}_Xg}.\label{XcTg}
\end{equation}
In fact, 
\begin{align*}
(X^cT_g)(q,v)
&=\frac 12 \left(X^k(q) \pd{g_{ij}}{q^k}(q)\, v^iv^j+g_{ij}(q)\, \pd{X^i}{q^k}(q)\,v^kv^j+g_{ij}(q)\,\pd{X^j}{q^k}(q)\,v^iv^k\right)\\
&=\frac 12 \left(X^k(q)\pd{g_{ij}}{q^k}(q)+g_{kj}(q)\, \pd{X^k}{q^i}(q)+g_{ik}(q)\, \pd{X^k}{q^j}(q)\right)v^iv^j
\end{align*}
and therefore, according to (\ref{LXg}),  the relation (\ref{XcTg}) follows. This relation may also be proved intrinsically by using the definitions of Lie derivative and of $T_g$ mentioned earlier in the text: for all $v\in TM$,
$$\begin{array}{rcl}
X^cT_g(v)&=&{\displaystyle \frac{d}{dt} T_g\circ T\phi_t(v)\big|_{t=0}
=\frac{d}{dt}\left( \frac{1}{2}g(T\phi_t(v),T\phi_t(v))\right)_{|t=0}}\\
&=&{\displaystyle\frac 12   \frac{d}{dt} (\phi_t^*g)(v,v)\big|_{t=0}
=\frac{1}{2} (\mathcal{L}_X g)(v,v)=T_{\mathcal{L}_Xg}(v,v)}.
\end{array}
$$

Consequently, $X\in \mathfrak{X}(M)$ is a Killing vector field for the Riemann structure  $g$ if and only if $X^c\in \mathfrak{X}(TM)$ is a symmetry for the corresponding free Lagrangian, i.e. 
 the conditions for $X^c$ to be a symmetry of $T_g$ are given by  (\ref{Killingcond}).
 
\section{A virial theorem for mechanical type Lagrangians}

A (regular)  Lagrangian determines a symplectic structure on the tangent bundle $TM$, the Cartan 2-form $\omega_L=-d\theta_L=-d(dL\circ S)$. Here $S$ is the vertical endomorphism \cite{Cr81, Cr83},
which is defined using the natural identification of the tangent space $T_qM$ with the vertical subspace of the tangent space in any point of $\tau_M^{-1}(q)$.   Such a vertical lift allows
 us to lift a tangent vector field $X\in \mathfrak{X}(M)$ to a vertical vector field $X^{\mathrm{v}}\in \mathfrak{X}(TM)$. This vector field is related to the complete lift by $S(X^c)=X^{\mathrm{v}}$,
  and if the local coordinate expression of $X$ is (\ref{vfinM}) that of $X^{\mathrm{v}}$ is 
  $$
  X^{\mathrm{v}}(v)=X^i(q)\, \pd{}{v^i}\in \mathfrak{X}(TM) .
  $$

  The energy of a Lagrangian system is defined by $E_L=\Delta L - L$, where $\Delta$ is the Liouville vector field, generator of dilations along the fibres, 
  given by $$\Delta f(q,v)=\frac{d}{dt} f(q,e^tv)|_{t=0},$$ for all $(q,v)\in TM$ and $f\in C^\infty(M)$. Hence, as $\Delta(T_g)=2\, T_g$ and $\Delta(V)=0$, 
  the total energy of a Lagrangian system of mechanical type  is $E_L=T_g+V$. 
  
The dynamics is then given by the dynamical vector field $\Gamma_L$ defined for a regular Lagrangian $L$ by 
\begin{equation}
i(\Gamma_L)\omega_L=dE_L.\label{dyneq}
\end{equation}
 In particular,  the coordinate expression of  the Cartan 1-form  $\theta_L=dL\circ S$ 
for the Lagrangian (\ref{mechL}) is given by
$$\theta_L(q,v)=g_{ij}(q)\,v^j\, dq^i$$
and the symplectic form $\omega_L=-d\theta_L$ by 
$$
\omega_L= g_{ij}\, dq^i\wedge dv^j +\frac{1}{2}\left(\frac{\partial g_{ij}}{\partial q^k}v^j-\frac{\partial g_{kj}}{\partial q^i}v^j\right) dq^i\wedge dq^k.
$$

A regular Lagrangian system on $M$ can be seen as a Hamiltonian system $(TM,\{\cdot,\cdot\},H)$, where the Hamiltonian $H$ is the energy $E_L$ and the Poisson bracket $\{\cdot,\cdot\}$ is defined by the symplectic 
 2-form $\omega_L$, i.e. $\{F_1,F_2\}=\omega_L(X_{F_1},X_{F_2})$, where $i(X_F)\omega_L=dF$. Recall that a Poisson bracket  is a skew-symmetric bilinear map on the algebra of smooth functions 
 on the manifold, that obeys the Jacobi identity and the Leibniz's rule w.r.t. the first argument. In this case, the Virial Theorem states (see e.g. \cite{CFR12} and references therein)  that for a smooth bounded function $G$ the time average of the Poisson bracket $\dot G=\{G,E_L\}$ vanish, that is,  $\d<X_G(E_L)>=0$.
 
We next recall some important geometric properties of 
 connections. Recall that a linear  connection  on a Riemann  manifold $(M,g)$ is compatible with  the Riemann structure $g$, i.e. the parallel transport along
  any curve is an isometry,   if and only if 
 \begin{equation}
 X(g(Y,Z))=g(\nabla_XY,Z)+g(Y,\nabla_XZ),\qquad \forall X,Y,Z\in\mathfrak{X}(M).\label{Rcon}
 \end{equation}
 The main result is that there exists a unique torsion-free metric connection on $M$, called  Levi-Civita connection, which is given by Koszul formula:
\begin{equation}
 \begin{array}{rcl}
 2g(\nabla_XY,Z)&=&Xg(Y,Z)+Yg(Z,X)-Zg(X,Y)\\&
-&g(X,[Y,Z])+g(Y,[Z,X])+g(Z,[X,Y ]).
\end{array}
\end{equation}
In particular, when a coordinate chart is considered, the Christoffel symbols of the second kind defined by 
$$
\nabla_{\partial/\partial q^j} \left(\pd{}{q^k}\right)=\Gamma _{jk}^i\pd{}{q^i}
$$
 are given by
 \begin{equation}
 \Gamma_{jk}^i(q)=\frac 12 g^{il}(q)\left( \pd{g_{lj}}{q^k}(q)+ \pd{g_{lk}}{q^j}(q)- \pd{g_{jk}}{q^l}(q)\right),\label{Gijk}
\end{equation}
where $g^{ij}$ are the inverse matrix entries of the Riemann structure $g$.
 
 Then, the linear connection is given by 
 $$
 \nabla_XY=X^i\left(\pd{Y^k}{q^i}+Y^j\ 
 \Gamma_{ij}^k(q)\right)\pd{}{q^k},$$
 and correspondingly, 
 $$
  \nabla_X\alpha= X^k\left(\pd{\alpha_j}{q^k}-\alpha_i\,\Gamma^i_{jk}\right) dq^j.
  $$
 
 Using these covariant derivatives the Killing  condition $\mathcal{L}_Xg=0$, i.e. (\ref{Killingcond}),  can be written in an intrinsic way as the condition for the covariant derivative 
 of the vector field $X$, to be a skew-symmetric endomorphism with respect to the metric $g$, that is  (see e.g. Proposition 4.10 of \cite{CvW}),
  for every $Y,Z\in\mathfrak{X}(M)$,
  \begin{equation}
  g(\nabla_YX,Z)+g(Y,\nabla_ZX)=0.\label{Killingcond2}
    \end{equation}

 Another remarkable relation is that if $\alpha$ is the 1-form $\alpha=\widehat g(X)$, where $X	\in\mathfrak{X}(M)$, then,
 using that the relation 
$\nabla_Z\<\alpha,Y>=\<\nabla_Z\alpha,Y>+\<\alpha,\nabla_ZY>$, for any two vector fields  $Y,Z	\in\mathfrak{X}(M)$, 
can be rewritten then as
$$Z(g(X,Y))=\<\nabla_Z\alpha,Y>+g(X,\nabla_ZY),
$$
and having in mind  the property of the compatibility of the connection with the metric,  we see  that 
\begin{equation}
\<\nabla_Z\alpha,Y>=g(\nabla_ZX,Y). \label{duality}
\end{equation}

The dynamical vector field, solution of the dynamical equation (\ref{dyneq})  turns out to be
$$
\Gamma_L(q,v)= v^i\pd{}{q^i}- \left( \Gamma_{jk}^i(q)v^jv^k+ g^{ij}(q)\frac{\partial V}{\partial q^j}(q)\right)\pd{}{v^i},
$$
where  $\Gamma_{jk}^i$ are the Christoffel symbols of the second kind with respect to 
 the Levi-Civita connection defined by the metric $g$, as given by (\ref{Gijk}).
 
 The Hamiltonian vector field of a smooth function $G$ on $TM$ is determined by the equation $i(X_G)\omega_L= dG$ and in local coordinates  is given  by
\begin{equation}
X_G(q,v)= g^{ij}(q)\frac{\partial G}{\partial{v^j}}(q,v) \pd{}{q^i}
+g^{ik}(q)\left[ \left(\frac{\partial g_{ln}}{\partial q^k}(q)\,v^n-\frac{\partial g_{kn}}{\partial q^l}(q)\,v^n\right) g^{lj}(q)\frac{\partial G}{\partial{v^j}}(q,v)
- \frac{\partial G}{\partial{q^k}}(q,v)\right ] \pd{}{v^i}.\label{XG}
\end{equation}

Since the total energy of the system is $E_L=T+V$, then,
\begin{equation}
X_G(E_L)= -\Gamma_L(G)=\frac{\partial G}{\partial{v^l}}\left( \Gamma_{jk}^lv^jv^k+ g^{il}\frac{\partial V}{\partial q^i}\right)-\frac{\partial G}{\partial{q^k}} v^k.
\end{equation}
The   virial theorem, $\d<X_G(E_L)>=0$ (see e.g. \cite{CFR12} for a geometric approach), establishes
 the following relation between time averages:
$$
\D<\frac{\partial G}{\partial{v^l}}\left( \Gamma_{jk}^lv^jv^k+ g^{il}\frac{\partial V}{\partial q^i}\right)-\frac{\partial G}{\partial{q^k}} v^k>=0.
$$
We will see that the preceding expression is much simpler when the vector field $X_G$ is a complete lift. 

A relevant result concerning the virial theorem is  that  
if  $X$ is a vector field on $M$ and $X^c$ its complete lift, then the function $G$ defined by $G=\langle\theta_L,X^c\rangle$ is such that  $\mathcal{L}_{\Gamma_L}G=\mathcal{L}_{X^c}L$, that is, 
\begin{equation}
\Gamma_L(G)=X^c(L). \label{GLGXcL}
\end{equation}
In fact, as $L$ is assumed to be regular the vector field $\Gamma_L$ satisfies 
$\mathcal{L}_{\Gamma_L}\theta_L=dL$ and then $\langle \mathcal{L}_{\Gamma_L}\theta_L-dL,X^c\rangle=0$. Using a well-known property of the Lie derivative, 
$$
\langle \mathcal{L}_{\Gamma_L}\theta_L,X^c\rangle=i(X^c)\mathcal{L}_{\Gamma_L}\theta_L=\mathcal{L}_{\Gamma_L}i(X^c)\theta_L+i([X^c,\Gamma_L])\theta_L,
$$we have
\begin{eqnarray*}
\Gamma_L(\langle\theta_L,X^c\rangle)-\langle\theta_L,[\Gamma_L,X^c]\rangle-\langle dL,X^c\rangle&=&0.
\end{eqnarray*}
But the Cartan 1-form $\theta_L$ is a semi-basic 1-form and  $[X^c, \Gamma_L]$ is a vertical vector field because $\Gamma_L$ is a second order vector field and then 
$\langle\theta_L,[\Gamma_L,X^c]\rangle=0$. Therefore, $\Gamma_L(\langle\theta_L,X^c\rangle)=\Gamma_L(G)=\langle dL,X^c\rangle=X^c(L)$.

From the expression $\Gamma_L(G)=X^c(L)$, evaluating on the time evolution and averaging on the interval $[0,\tau]$, in the limit when $\tau\to\infty$, we get 
as we did in \cite{CFR12} in an analogous case,  that if $G$ remains bounded,
$$
\d<X^c(L)>=0 \Longleftrightarrow \d<X^c(T_g)-X(V)>=0,
$$
whose local coordinate expression is 
\begin{equation}
\D<X^k\frac{1}{2}\, \pd{g_{ij}}{q^k}v^iv^j+\pd{X^k}{q^l}g_{kj}v^lv^j-X^k \pd{V}{q^k}>=0.
\end{equation}

In the particular case studied in \cite{CFR12}, in which there exists a nonzero real number $a$ such that $X^cL=a\,L$ we recover the result $\d<L>=0$, i.e. $\d<T-V>=0$.
\begin{example}[Spherical geometry]\label{SGeo}
Consider as an illustrative example the motion of a unity mass point on a sphere of radius $R=1/\sqrt{\lambda}$ centred at the origin and  the usual spherical polar coordinates, i.e. a point $P$ on the sphere is fixed by two coordinates $(\theta,\phi)$ such that 
$$
\mathbf{x}(\theta,\phi)=(R\,\sin\theta\,\cos\phi, R\,\sin\theta\,\sin\phi, R\,\cos\theta), 
$$
and then 
$$
g_{\theta\theta}=R^2,\quad  g_{\theta\phi}=0,\quad g_{\phi\phi}=R^2\,\sin^2\theta,
$$
i.e. the arc-length is 
\begin{equation}
ds^2=R^2(d\theta^2+\sin^2\theta\,d\phi^2).\label{gsg}
\end{equation}
Suppose that the motion is under the action described by a potential function $V(\theta)$
that does not depend on $\phi$ but only on the distance to the North pole.
Then,  if $X$ is the vector field on the base $X=\tan\theta\, \partial/\partial \theta$, with complete lift 
$$X^c=\tan \theta\pd{}{\theta}+\sec ^2\theta\,  v_\theta\pd{}{v_\theta},
$$
as the kinetic energy is $T=\frac 12R^2(v_\theta^2+\sin^2\theta\,v_\phi^2) $ and 
$$
X^c(T)=R^2(\sec ^2\theta\,  v_\theta^2+ \sin^2\theta\,v_\phi^2), \qquad X(V)=\tan\theta\,\pd V\theta,
$$
the Virial Theorem establishes that
$$
\D<{R^2(\sec ^2\theta\,  v_\theta^2+ \sin^2\theta\,v_\phi^2)}>=\D<\tan\theta\,\pd V\theta>.
$$

The points of the lower half sphere can be described by the points obtained by central projection onto the tangent plane $x_3=-R$, i.e. points $(q_1,q_2,-R)$ such that
$$\left\{\begin{array}{rcl}q_1&=&\dfrac {x_1\, R}{-x_ 3}=-\dfrac{R^2\,\sin\theta\,\cos\phi}{R\,\cos\theta}=-R\tan\theta\,\cos\phi \\ q_2&=&\dfrac {x_2\, R}{-x_ 3}=-\dfrac{R^2\,\sin\theta\,\sin\phi}{R\,\cos\theta}=-R\tan\theta\,\sin\phi\end{array}\right.
$$
or eliminating the South pole and using polar  coordinates $(r,\phi)$ centred at $(0,0,-R)$, i.e. $r=-R\tan\theta$, having in mind that 
$$\frac{d\theta}{dr}=-\frac 1R\ \frac 1{1+(r/R)^2}=-\frac 1R\ \frac 1{1+\lambda\,r^2},
$$
the expression of  the arc-length becomes
$$ds^2=\frac 1{(1+\lambda r^2)^2} dr^2+\frac{r^2}{(1+\lambda r^2)}d\phi^2.
$$

 In terms of the new coordinates,
as $\tan\theta=-r/R$,
$$
\sec^2\theta=1+\lambda r^2,\qquad \sin^2\theta=\frac {r^2}{R^2} (1+\lambda r^2)^{-1},\qquad v_r=R(1+\lambda r^2)v_\theta
$$
and then we can rewrite the preceding equation as 
\begin{equation}
\D<(1+\lambda r^2)^{-1}(v_r^2+r^2v_\phi^2)>=\D<r\,(1+\lambda r^2)\pd Vr>,\label{liform}
\end{equation}
which coincides with the expression (14) of \cite{LZC11}. However, in~\cite{LZC11}  such expression was only proved for two special cases and it was proposed as a guess for the general case.
\end{example}

\section{Affine virial functions}

As mentioned earlier, the Virial Theorem  for a given smooth bounded function $G$ is but $\d<X_G(E_L)>=0$, which for systems of mechanical type reduces to $\d<X_G(T)+X_G(V)>=0$. A particularly simple case would be when $X_G$ is a complete lift and this property constraints the possible form of $G$. 
 
 A particularly simple case would be when $X_G$ is a complete lift and this 
 property constraints the possible form of $G$.  
  
Note first that the expression (\ref{XG}) for the vector field $X_G$ shows that  the necessary and sufficient condition for $X_G$ to be $\tau_M$ projectable is that $\partial G/\partial v^i$ be a basic function, i.e. $G$ is an affine in velocities function, or in more geometric language,  there must be a 1-form $\alpha=\alpha_k(q)\, dq^k$ on $M$ and a function $\varphi$  on $M$ such that 
$$G=\widehat \alpha+\tau_M^*\varphi,$$ 
and then the $\tau_M$-related vector field is $\widehat g^{-1}(\alpha)$.

\subsection{Killing vector fields}

In order to the vector field $X_G$ to be  a complete lift,  the $\tau_M$-related vector field  must be $\widehat g^{-1}(\alpha) $,  and the  $n$ functions $\alpha_k$  and the function $\varphi$ on the base manifold must satisfy, for any index $i$, 
$$
\pd{}{q^k}\left(g^{ij}\alpha_j\right)v^k=g^{ik}\left[ \left(\frac{\partial g_{ln}}{\partial q^k}v^n-\frac{\partial g_{kn}}{\partial q^l}v^n\right) g^{lj}\alpha_j
- v^j\frac{\partial \alpha_j}{\partial{q^k}}-\pd{\varphi}{q^k}\right ] .
$$
These conditions can be rewritten for any pair of indices $(i,k)$,   as:
$$
\alpha_j\,\pd{g^{ij}}{q^k}+g^{ij}\,\pd{\alpha_j}{q^k}=g^{ij}\pd{g_{lk}}{q^j}\,g^{lm}\,\alpha_m-g^{im}\,\pd{g_{mk}}{q^l}\,   g^{lj}\,\alpha_j
-\pd{\alpha_k}{q^j}\,g^{ij},\qquad  \pd{\varphi}{q^k}=0,
$$
and therefore as follows
$$
g^{ij}\left(\pd{\alpha_j}{q^k}+\pd{\alpha_k}{q^j}\right)=\alpha_n\left(-\pd{g^{in}}{q^k}+g^{ij}g^{ln}\pd{g_{lk}}{q^j}-g^{im}g^{ln}\pd{g_{mk}}{q^l}\right) , \qquad  \pd{\varphi}{q^k}=0.
$$

Using now that 
$$
\pd{g^{ij}}{q^k}= -g^{il}g^{jm}\pd{g_{lm}}{q^k},
$$
the preceding equation becomes
$$
g^{ij}\left(\pd{\alpha_j}{q^k}+\pd{\alpha_k}{q^j}\right)=\alpha_n\left(g^{ir}g^{ns}\pd{g_{rs}}{q^k}+g^{ij}g^{ln}\pd{g_{lk}}{q^j}-g^{im}g^{ln}\pd{g_{mk}}{q^l}\right),
$$
or equivalently
$$g^{ij}\left(\pd{\alpha_j}{q^k}+\pd{\alpha_k}{q^j}\right)=\alpha_n g^{ij}g^{ln}\left(\pd{g_{jl}}{q^k}+\pd{g_{lk}}{q^j}-\pd{g_{jk}}{q^l}
\right)=2 \alpha_n g^{ln}\Gamma^i_{lk}.
$$
which can be rewritten as
$$
\pd{\alpha_j}{q^k}+\pd{\alpha_k}{q^j}=2\,\alpha_i\,\Gamma^i_{jk},
$$
or in other words,  for any pair of indices $i$, $k$,
\[
\left(\pd{\alpha_j}{q^k}-\alpha_i\,\Gamma^i_{jk}\right)+\left(\pd{\alpha_k}{q^j}-\alpha_i\,\Gamma^i_{kj}\right)=0.
\]
Multiplying both sides by $Z^jY^k$ and summing on repeated indices we see that 
 this equation is the coordinate expression of the intrinsic one  
 $$\<\nabla_Y\alpha,Z>+\<\nabla_Z\alpha,Y>=0,\qquad \forall Y,Z\in\mathfrak{X}(M),$$
 so that the 2-covariant tensor field $\nabla\alpha$ is skew-symmetric. But as $\alpha=\widehat g(X)$, the relation (\ref{duality}) allows us to express this condition as 
 $g(\nabla_YX,Z)+g(Z,\nabla_YX)=0$, which means that $X$  
satisfies  the Killing condition (\ref{Killingcond2}). The preceding result can be summarized  in the following proposition  whose intrinsic proof is also given:

\begin{proposition} The vector field $X\in\mathfrak{X}(M)$ is a Killing vector w.r.t. 
the Riemann structure $g$ iff $X_{\widehat\alpha}=X^c$, where $\widehat\alpha$
is the linear in the fibres function defined by the 1-form $\alpha=\widehat g(X)$.
\end{proposition}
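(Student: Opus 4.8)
The plan is to mimic the coordinate computation displayed above but intrinsically, through the Cartan $1$-form. First I would note that the symplectic form $\omega_L=-d\theta_L$ used to define $X_{\widehat\alpha}$ via $i(X_{\widehat\alpha})\omega_L=d\widehat\alpha$ depends only on $g$: since $\theta_L$ only involves the fibre derivatives of $L$, adding $-\tau_M^*V$ changes nothing, so we may take $L=T_g$, for which $\theta_L(q,v)=g_{ij}(q)v^j\,dq^i$. As $\omega_L$ is nondegenerate and $X_{\widehat\alpha}$ is its unique solution of $i(X_{\widehat\alpha})\omega_L=d\widehat\alpha$, the claimed equality $X_{\widehat\alpha}=X^c$ is equivalent to $i(X^c)\omega_L=d\widehat\alpha$, so the whole proof reduces to evaluating $i(X^c)\omega_L$.

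For that I would apply Cartan's identity to $\omega_L=-d\theta_L$, getting $i(X^c)\omega_L=d\bigl(i(X^c)\theta_L\bigr)-\mathcal{L}_{X^c}\theta_L$. The first term uses $S(X^c)=X^{\mathrm v}$: one has $i(X^c)\theta_L=\langle dT_g,S(X^c)\rangle=X^{\mathrm v}(T_g)$, and evaluating $X^{\mathrm v}(T_g)$ at $v\in T_qM$ gives $\tfrac{d}{ds}\bigl(\tfrac12 g_q(v+sX_q,v+sX_q)\bigr)\big|_{s=0}=g_q(X_q,v)=\langle\widehat g(X)_q,v\rangle=\widehat\alpha(v)$, hence $i(X^c)\theta_L=\widehat\alpha$. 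For the second term I would use the classical fact that complete lifts preserve the vertical endomorphism, $\mathcal{L}_{X^c}S=0$ (see \cite{Cr81,Cr83}); combined with $\theta_L=dT_g\circ S$ this gives $\mathcal{L}_{X^c}\theta_L=(d\,\mathcal{L}_{X^c}T_g)\circ S=(d(X^cT_g))\circ S$, which by (\ref{XcTg}) equals $(dT_{\mathcal{L}_Xg})\circ S$, the semibasic $1$-form whose local expression, by (\ref{LXg}), is $(\mathcal{L}_Xg)_{ij}v^j\,dq^i$.

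Putting the pieces together, $i(X^c)\omega_L=d\widehat\alpha-\vartheta$ with $\vartheta$ the semibasic $1$-form of local expression $(\mathcal{L}_Xg)_{ij}v^j\,dq^i$. Hence $X_{\widehat\alpha}=X^c$ iff $\vartheta=0$, which, reading off coefficients, holds exactly when $(\mathcal{L}_Xg)_{ij}=0$, i.e. when $X$ is a Killing vector field. Equivalently, one just compares the coordinate formula (\ref{XG}) specialized to $G=\widehat\alpha$ with the coordinate expression of $X^c$, which is the computation already displayed in the paragraphs preceding the statement.

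I expect the one nonroutine ingredient to be the identity $\mathcal{L}_{X^c}\theta_L=(d(X^cL))\circ S$, which rests on $\mathcal{L}_{X^c}S=0$: this is standard but must be cited, and if one wants to stay fully self-contained one differentiates $\theta_L=g_{ij}v^j\,dq^i$ by hand, reproducing exactly the algebra that already appears before the Proposition. Everything else is bookkeeping with Cartan's magic formula together with the nondegeneracy of $\omega_L$ and the relation (\ref{XcTg}).
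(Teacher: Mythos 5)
Your proof is correct. The identification $i(X^c)\theta_{T_g}=\widehat\alpha$ is exactly the paper's first step, and your key identity $i(X^c)\omega_{T_g}=d\widehat\alpha-\theta_{T_{\mathcal{L}_Xg}}$ is precisely the relation the paper establishes as (\ref{XaXc}) in the proposition immediately following this one, where $X_{\widehat\alpha}-X^c$ is identified with the vertical field corresponding to the semibasic form $\theta_{T_{\mathcal{L}_Xg}}$. The difference is one of organization: you derive \emph{both} implications from that single identity together with the nondegeneracy of $\omega_{T_g}$ and the elementary observation that $\theta_{T_{\mathcal{L}_Xg}}=(\mathcal{L}_Xg)_{ij}v^j\,dq^i$ vanishes iff $\mathcal{L}_Xg=0$, whereas the paper uses the identity only for the converse and proves the forward implication by an energy argument: from $\Gamma_L(\langle\theta_L,X^c\rangle)=X^c(L)$ (relation (\ref{GLGXcL})) and $X_G(E_L)=-\Gamma_L(G)$ one gets $X^c(E_L)=-X^c(L)$, hence $2\,X^c(T_g)=0$ and $T_{\mathcal{L}_Xg}=0$. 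Your version is tidier, avoids invoking the dynamics $\Gamma_L$ altogether, and makes transparent why the conformal case ($\mathcal{L}_Xg=f\,g$ yielding $X_{\widehat\alpha}=X^c-f\,\Delta$) is the natural generalization; the paper's forward argument has the minor virtue of working directly with the full mechanical Lagrangian $L=T_g-V$. One point you treat more carefully than the paper: the step $\mathcal{L}_{X^c}\theta_{T_g}=\theta_{X^cT_g}$ is simply asserted in (\ref{XaXc}), while you correctly ground it in the standard fact $\mathcal{L}_{X^c}S=0$ (or, equivalently, the hand computation with $\theta_{T_g}=g_{ij}v^j\,dq^i$).
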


\begin{proof} The linear in the fibres function  $G=\<\theta_{T_g}, X^c>$ is nothing but the function $\widehat\alpha$, because
$$
\<\theta_{T_g},X^c>= \<dT_g\circ S,X^c>= \<dT_g, S(X^c)>=X^{\mathrm{v}}(T_g),
$$
where the vector field  $X^{\mathrm{v}}$ is the vertical lift of $X$ \cite{Cr81,Cr83}, and therefore,
\[
\<\theta_{T_g},X^c>(v)=\frac{d}{ds}T_g(v+sX(\tau_M(v))\big|_{s=0}=g(X(\tau_M(v)),v)=\widehat \alpha(v),
\]
for every $v\in TM$.

 If the Hamiltonian vector field  $X_G$ is the complete lift $X^c$, then the relation (\ref{GLGXcL}) shows that $X^c(E_L)=-X^c(L)$, because $X^c(L)=
\Gamma_LG=-X_G(E_L)=-X^c(E_L)$. Therefore, $X^c(T_g)-X^c(V)=-X^c(T_g)-X^c(V)$, i.e.
 $X^c(T_g)=0$, and then $X$ is a Killing vector. On the other hand, if $X$  is a Killing vector we have that $T_{\mathcal{L}_Xg}=0$. Since $i(X_G-X^c)\omega_{T_g}=\theta_{T_{\mathcal{L}_Xg}}=0$, then $X_G=X^c$.\end{proof}

Let $X$ be a Killing vector field, and $\alpha=  \widehat{g}(X)$ the associated 1-form. As we have seen, $X_{  \widehat{\alpha}}=X^c$, from where we have
\[
\{E_L,  \widehat{\alpha}\}=X_{  \widehat{\alpha}}E_L=X^cE_L=E_{X^cL}=T_{\mathcal{L}_Xg}+\tau_M^*\left({\mathcal{L}_XV}\right)=\tau_M^*\left({\mathcal{L}_XV}\right).
\]
Taking mean values we get that for every Killing vector field $X$:
\[
\d<\mathcal{L}_XV>=0.
\]
Therefore, if $X$ is not a symmetry of the potential energy then the mean value of the derivative $\mathcal{L}_XV$ vanishes along any trajectory of the Lagrangian dynamical system.

\begin{example}[Spherical geometry revisited] 

Coming back to the case of the spherical geometry, we can say that the vector field 
$$X=X_\theta\,\pd{}\theta+X_\phi\, \pd{}\phi$$
is a Killing vector field if and only if its complete lift
$$X^c=X_\theta\,\pd{}\theta+X_\phi\, \pd{}{\phi}+\left(\pd{X_\theta}\theta\,v_\theta+\pd{X_\theta}\phi\, v_\phi\right)\pd{}{v_\theta}+\left(\pd{X_\phi}\theta\,v_\theta+\pd{X_\phi}\phi\,v_\phi\right)\, \pd{}{v_\phi}
$$
is a symmetry of the kinetic energy
$$T(\theta,\phi,v_\theta,v_\phi)=\frac 12 (v_\theta^2+\sin^2\theta\, v_\phi^2).
$$
From the condition 
$$\left(\pd{X_\theta}\theta\,v_\theta+\pd{X_\theta}\phi\, v_\phi\right)v_\theta+\sin^2\theta\left(\pd{X_\phi}\theta\,v_\theta+\pd{X_\phi}\phi\,v_\phi\right)v_\phi+X_\theta\,\sin\theta\,\cos\theta\,{v_\phi}^2=0,
$$
we obtain the conditions:
$$\begin{array}{rl} &{\displaystyle\pd{X_\theta}\theta}=0,\\&\\&{\displaystyle\pd{X_\theta}\phi+\sin^2\theta\pd{X_\phi}\theta}=0\\
&\sin\theta\left(\cos\theta\,X_\theta+\sin\theta{\displaystyle\pd{X_\phi}\phi}\right)=0
\end{array}
$$

One solution is given by $X_\theta=0$ and $X_\phi=1$, i.e. the vector field $X_3=\partial/\partial \phi$ is a Killing vector field. Another particular solution is $X_\theta=\cos\phi$ and $X_\phi=-\sin\phi\,\mathrm{cotan\,}\theta$,
and then another  Killing  vector field is
$$X_1=\cos\phi\pd{}\theta-\sin\phi\,\mathrm{cotan\,}\theta\pd{}\phi.
$$
The corresponding virial theorem is
\[
\d<\mathcal{L}_{X_1}V>=0 \Longleftrightarrow 
\d<\cos\phi\pd{V}\theta>=\d<\sin\phi\,\mathrm{cotan\,}\theta\pd{V}\phi>.
\]
\end{example}

\begin{example}[Periodic Toda lattice with $n$ particles]
A periodic Toda lattice system with $n$ particles without impurities (each particle as the same mass $m$), is defined by a mechanical Lagrangian $L=T-V$ on $T\mathbb{R}^n$. The kinetic energy is the quadratic function defined by the Euclidian metric on $\mathbb{R}^n$,
$$
T(q,v)=\frac{1}{2}\sum_{i=1}^n m\,v_i^2,
$$
and the potential is given by
$$
V(q)=\sum_{i=1}^{n} e^{q_i-q_{i+1}},
$$
where $q_{n+1}=q_1$. Consider the following vector field, for a fixed $k=1,\ldots,n$,
$$
X_k=\frac{\partial}{\partial q_{k}}.
$$
The vector field is a Killing vector w.r.t. the Euclidean metric.

Then the Virial Theorem implies that $\d<\mathcal{L}_{X_k}V>=\d<e^{q_k-q_{k+1}}-e^{q_{k-1}-q_{k}}>=0$. Therefore, $\d<e^{q_k-q_{k+1}}>=\d<e^{q_{k-1}-q_{k}}>$ for every $k$ and hence $\d<V>=n\d<e^{q_1-q_{2}}>$.

\end{example}

\begin{example}[Kepler problem in polar coordinates]
Consider a particle $P$ of mass $m$ moving in a plane under the action of a central force $F(r)=-\gamma\, m\,m'/r^2$ on the direction of a fixed point $O$ of mass $m'\gg m$, where $\gamma$ is a positive constant and $r$ represents the distance between $O$ and the point particle $P$. Let $\phi$ be the angle that the line $OP$ makes with a fixed direction on the plane. In polar coordinates the arc-length is given by $ds^2=dr^2+r^2d\phi^2$. The kinetic energy of the particle is given by
$$
T(r,\phi,v_r, v_\phi)=\frac{m}{2}\left(v_r^2+r^2\,v_{\phi}^2\right)
$$
and the potential is the function $V(r)=-\gamma\, m\, m'/r$. The vector field
$$
X=\cos\phi\,\pd{}{ r}-\frac 1r\sin\phi\pd{}{\phi}
$$
is a Killing vector field of the Euclidean metric in polar coordinates.
Then the Virial Theorem tell us that $\d<\mathcal{L}_XV>=0$, that is, $\d<-\cos(\phi)\gamma\, m\, m'/r^2>=0$.

\end{example}

\subsection{Conformal Killing and homothetic vector fields}

Conformal Killing vector fields and in particular homothetic vector fields have also been relevant in many problems in physics and more particularly in space-time geometry (see e.g, \cite{MMM,CT71,GM07}). We now explore the information that we can extract from them in the problem of virial theorem we are considering. With this aim  we first find the difference between the Hamiltonian vector field $X_{\widehat \alpha}$ associated to the 1-form  $\alpha=  \widehat{g}(X)$,  where $X$ is  a vector  field on $M$ and the complete lift of $X$.

\begin{proposition}
If  $X$ is the vector field on $M$  associated to the 1-form $\alpha$, $\alpha=  \widehat{g}(X)$, and as before $  \widehat{\alpha}\in C^\infty(TM)$ is the function $  \widehat{\alpha}(v)=g(X(\tau_M(v)),v)$, for $v\in TM$,  then the difference of the complete lift $X^c$ of $X$ and the Hamiltonian vector field $X_{  \widehat{\alpha}}$ associated to $  \widehat{\alpha}$ with respect to the symplectic form $\omega_{T_g}$ is the  vertical vector field  whose contraction with the symplectic form $\omega_{T_g}$ is the semi-basic 1-form 
$ \theta_{T_{\mathcal{L}_Xg}}$.
\end{proposition}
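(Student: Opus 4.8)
The plan is to contract $X^c-X_{\widehat\alpha}$ with the symplectic form $\omega_{T_g}$ by a direct use of Cartan's calculus, to recognise the resulting $1$-form as semi-basic, and to read off from it both the verticality of the difference and its identification with $\theta_{T_{\mathcal{L}_Xg}}$. The starting point, already established in the proof of the previous proposition, is that $i(X^c)\theta_{T_g}=\langle\theta_{T_g},X^c\rangle=X^{\mathrm{v}}(T_g)=\widehat\alpha$, since $S(X^c)=X^{\mathrm{v}}$ and $X^{\mathrm{v}}(T_g)(v)=g(X(\tau_M(v)),v)$. Then, using $\omega_{T_g}=-d\theta_{T_g}$ and Cartan's identity $\mathcal{L}_{X^c}=i(X^c)\,d+d\,i(X^c)$,
\[
i(X^c)\omega_{T_g}=-i(X^c)\,d\theta_{T_g}=-\mathcal{L}_{X^c}\theta_{T_g}+d\big(i(X^c)\theta_{T_g}\big)=-\mathcal{L}_{X^c}\theta_{T_g}+d\widehat\alpha .
\]
Since $i(X_{\widehat\alpha})\omega_{T_g}=d\widehat\alpha$ by the definition of the Hamiltonian vector field, subtracting gives $i(X^c-X_{\widehat\alpha})\omega_{T_g}=-\mathcal{L}_{X^c}\theta_{T_g}$, and the statement is reduced to computing $\mathcal{L}_{X^c}\theta_{T_g}$.

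For this I would use that the flow of $X^c$ is the tangent lift $T\phi_t$ of the flow $\phi_t$ of $X$, and that tangent lifts preserve the vertical endomorphism, so that $\mathcal{L}_{X^c}S=0$. Hence
\[
\mathcal{L}_{X^c}\theta_{T_g}=\mathcal{L}_{X^c}\big(dT_g\circ S\big)=d\big(\mathcal{L}_{X^c}T_g\big)\circ S=d\big(X^cT_g\big)\circ S=\theta_{X^cT_g},
\]
and by (\ref{XcTg}) this equals $\theta_{T_{\mathcal{L}_Xg}}$. (Alternatively, $\mathcal{L}_{X^c}\theta_{T_g}=\theta_{T_{\mathcal{L}_Xg}}$ may be checked by a short computation in local coordinates from $\theta_{T_g}=g_{ij}v^j\,dq^i$, the coordinate form of $X^c$, and (\ref{LXg}).) Therefore $i(X^c-X_{\widehat\alpha})\omega_{T_g}$ is, up to the sign fixed by the convention for $\omega_{T_g}$, the $1$-form $\theta_{T_{\mathcal{L}_Xg}}$; since $\theta_{T_K}=K_{ij}v^j\,dq^i$ for any symmetric covariant $2$-tensor $K$, this is a semi-basic $1$-form.

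It remains to conclude that $Z:=X^c-X_{\widehat\alpha}$ is vertical. The quickest route is intrinsic: $X^c$ is $\tau_M$-related to $X$ by construction of the complete lift, and by the $\tau_M$-projectability analysis recalled above (with $\varphi=0$) the vector field $X_{\widehat\alpha}$ is $\tau_M$-related to $\widehat g^{-1}(\alpha)=X$ as well, so $Z$ projects to zero on $M$. Equivalently, verticality can be read off the displayed $1$-form: writing $Z=A^i\,\pd{}{q^i}+B^i\,\pd{}{v^i}$, the $dv^j$-component of $i(Z)\omega_{T_g}$ is $g_{ij}A^i$, which must vanish for every $j$ because $\theta_{T_{\mathcal{L}_Xg}}$ carries no $dv$ term, and nondegeneracy of $g$ then forces $A\equiv0$. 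This completes the proof. The one genuinely non-formal ingredient is the identity $\mathcal{L}_{X^c}\theta_{T_g}=\theta_{T_{\mathcal{L}_Xg}}$ (equivalently $\mathcal{L}_{X^c}S=0$); all the rest is bookkeeping with Cartan's formula together with $i(X^c)\theta_{T_g}=\widehat\alpha$ and (\ref{XcTg}).
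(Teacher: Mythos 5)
Your proof is correct and follows essentially the same route as the paper's: the identity $i(X^c)\theta_{T_g}=\widehat\alpha$, Cartan's formula to produce $\mathcal{L}_{X^c}\theta_{T_g}$, the relation $\theta_{X^cT_g}=\theta_{T_{\mathcal{L}_Xg}}$ from (\ref{XcTg}), and verticality via $\tau_M$-projectability of both vector fields onto $X$. Your explicit justification of $\mathcal{L}_{X^c}\theta_{T_g}=\theta_{X^cT_g}$ through $\mathcal{L}_{X^c}S=0$ is a welcome addition that the paper leaves implicit, but it does not change the argument.
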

\begin{proof}
Notice first that  as both vector fields,  $X^c$  and $X_{\widehat{\alpha}}$, are projectable on the vector field $X=\widehat g^{-1}(\alpha)$, the difference vector is vertical. Moreover, taking  into account the above mentioned relation $\<\theta_{T_g},X^c>=  \widehat{\alpha}$, 
we have
\begin{equation}
i(X_{  \widehat{\alpha}}-X^c)\omega_{T_g}
=i(X_{  \widehat{\alpha}})\omega_{T_g}-i(X^c)\omega_{T_g}=d  \widehat{\alpha}+i(X^c)d\theta_{T_g},
\end{equation}
and then 
\begin{equation}i(X_{  \widehat{\alpha}}-X^c)\omega_{T_g}
=d(i(X^c)\theta_{T_g})+i(X^c)d\theta_{T_g}
=\mathcal{L}_{X^c}\theta_{T_g}=\theta_{X^cT_g}
=\theta_{T_{\mathcal{L}_{X}g}},\label{XaXc}
\end{equation}
where the last equality follows from (\ref{XcTg}).
\end{proof}

It is also well known (see e.g. \cite{CR93})  that contraction with the symplectic forms $\omega_L$ defined by a regular Lagrangian $L$ establishes a one-to-one correspondence of vertical vector fields with semi basic 1-forms. More explicitly, in the particular case we are considering of $L=T_g$, the semi basic 1-form corresponding to the Liouville vector field $\Delta$, generating dilation along the fibres of $TM$,  is $-\theta_{T_g}$ because, as $\theta_{T_g}$ is semi-basic, 
$$i(\Delta)\omega_{T_g}= -i(\Delta)d\theta_{T_g}= -\mathcal{L}_{\Delta}\theta_{T_g},
$$
and as $\theta_{T_g}$ is homogeneous of degree one in velocities, we find that 
\begin{equation}
i(\Delta)\omega_{T_g}=-\theta_{T_g}.\label{DwTg}
\end{equation}
This allows us to write:
$$i(X_{  \widehat{\alpha}}-X^c)\omega_{T_g}=-i(\Delta)\omega_{T_{\mathcal{L}_Xg}}.
$$

As a consequence, in the case of a conformal Killing vector field, we have the following result.
\begin{theorem}
A vector field $X$ on $M$  is a conformal Killing vector field, i.e.  there exists a function $f\in C^\infty(M)$ such that $\mathcal{L}_Xg=f\,g$, if and only if 
$X_{  \widehat{\alpha}}=X^c-f\,\Delta$, where $\alpha $ is the 1-form $\alpha=\widehat g(X)$.
\end{theorem}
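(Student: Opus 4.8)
The plan is to read off both implications directly from the preceding Proposition, which identifies the vertical vector field $X_{\widehat\alpha}-X^c$ by the relation $i(X_{\widehat\alpha}-X^c)\omega_{T_g}=\theta_{T_{\mathcal{L}_Xg}}$ in~(\ref{XaXc}), combined with the identity $i(\Delta)\omega_{T_g}=-\theta_{T_g}$ from~(\ref{DwTg}) and the non-degeneracy of $\omega_{T_g}$. The only preliminary I would record is an elementary observation about the assignment $K\mapsto\theta_{T_K}$ for symmetric covariant $2$-tensor fields: in tangent bundle coordinates $\theta_{T_K}=K_{ij}(q)\,v^j\,dq^i$, so that (i) for a basic function $f\in C^\infty(M)$ one has $\theta_{T_{fK}}=f\,\theta_{T_K}$, since $f$ does not involve the fibre coordinates, and (ii) the map $K\mapsto\theta_{T_K}$ is injective, because $K_{ij}$ is recovered from the coefficients of $\theta_{T_K}$ by differentiation in $v^j$. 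Applying (i) with $K=\mathcal{L}_Xg$: if $\mathcal{L}_Xg=f\,g$ with $f\in C^\infty(M)$, then $\theta_{T_{\mathcal{L}_Xg}}=\theta_{T_{fg}}=f\,\theta_{T_g}$.

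For the direct implication, assume $X$ is a conformal Killing vector field with conformal factor $f\in C^\infty(M)$. Then, using the preliminary observation and~(\ref{DwTg}),
\[
i(X_{\widehat\alpha}-X^c)\omega_{T_g}=\theta_{T_{\mathcal{L}_Xg}}=f\,\theta_{T_g}=-f\,i(\Delta)\omega_{T_g}=i(-f\,\Delta)\omega_{T_g},
\]
hence $i\bigl(X_{\widehat\alpha}-X^c+f\,\Delta\bigr)\omega_{T_g}=0$, and non-degeneracy of the symplectic form $\omega_{T_g}$ gives $X_{\widehat\alpha}=X^c-f\,\Delta$. For the converse, assume $X_{\widehat\alpha}=X^c-f\,\Delta$ for some $f\in C^\infty(M)$. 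Running the same chain of equalities backwards,
\[
\theta_{T_{\mathcal{L}_Xg}}=i(X_{\widehat\alpha}-X^c)\omega_{T_g}=i(-f\,\Delta)\omega_{T_g}=-f\,i(\Delta)\omega_{T_g}=f\,\theta_{T_g}=\theta_{T_{fg}},
\]
and the injectivity of $K\mapsto\theta_{T_K}$ forces $\mathcal{L}_Xg=f\,g$, so $X$ is a conformal Killing vector field.

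There is no serious obstacle here: the substantive work has already been done in the preceding Proposition (built on~(\ref{XcTg})) and in~(\ref{DwTg}), and what remains is only bookkeeping with the Cartan form of a kinetic-energy Lagrangian. The one point deserving a line of care is the handling of $\theta_{T_{fg}}$: that the conformal factor, being a function on $M$, factors out of $\theta_{T_{fg}}$, and that $\theta_{T_K}$ determines $K$, so that the computation can be reversed to yield the ``if'' direction. I would also remark that the hypothesis of the converse already supplies $f\in C^\infty(M)$, so one need not separately argue that a proportionality $X_{\widehat\alpha}-X^c=h\,\Delta$ forces the factor $h$ to be basic (although this too follows from comparing coefficients of $\theta_{T_{\mathcal{L}_Xg}}$ and $\theta_{T_g}$, using that $g$ is non-degenerate).
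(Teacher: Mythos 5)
Your proof is correct and follows essentially the same route as the paper's: both directions are read off from the identity $i(X_{\widehat\alpha}-X^c)\omega_{T_g}=\theta_{T_{\mathcal{L}_Xg}}$ of the preceding Proposition together with $i(\Delta)\omega_{T_g}=-\theta_{T_g}$ and the non-degeneracy of $\omega_{T_g}$. Your only addition is to make explicit the injectivity of $K\mapsto\theta_{T_K}$, which the paper uses implicitly in the step ``$\theta_{T_{\mathcal{L}_Xg}}=f\,\theta_{T_g}$ implies $\mathcal{L}_Xg=f\,g$''.
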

\begin{proof}
Indeed, if $X$ is a conformal Killing vector field,   there exists a function $f\in C^\infty(M)$ such that 
$\mathcal{L}_{X}g=fg$, and then $\theta_{T_{\mathcal{L}_Xg}}=f\,\theta_{T_g}$. The relation (\ref{XaXc}) reduces in this case to  $i(X_{  \widehat{\alpha}}-X^c)\omega_{T_g}
=f\,\theta_{T_g}$, and then using (\ref{DwTg}), to $i(X_{  \widehat{\alpha}}-X^c)\omega_{T_g}=-i(f\,\Delta)\omega_{T_g}$. As $\omega_{T_g}$ is nondegenerate 
we find $X_{  \widehat{\alpha}}-X^c=-f\,\Delta$.

Conversely, if  there exists a function $f\in C^\infty(M)$ such that $X_{  \widehat{\alpha}}-X^c=-f\Delta$, then 
$$i(X_{  \widehat{\alpha}}-X^c)\omega_{T_g}=-i(f\,\Delta)\omega_{T_g}=f\,\theta_{T_g},
$$
and as a consequence of (\ref{XaXc}) we obtain that $\theta_{T_{\mathcal{L}_{X}g}}=f\,\theta_{T_g}$, which implies $\mathcal{L}_Xg=f\,g$ and then $X$ is a conformal Killing vector field.
\end{proof}

This result is in agreement with the meaning  of being a conformal Killing vector field: its flow transforms geodesics in re-parameterized geodesics, the responsible for reparametrization is the term $f\,\Delta$. 
Of course, for $f=0$ we recover the result of Proposition 1.  

The preceding results allow us to introduce a Virial Theorem for conformal Killing vector fields.
\begin{theorem}
Let us consider a Lagrangian of mechanical type $L=L_{g,V}=T_g-\tau_M^*{V}$, a conformal Killing vector  field $X$ for $g$, i.e.  there exists a function $f\in C^\infty(M)$ such that $\mathcal{L}_Xg=f\,g$, and the associated 1-form
$\alpha=\widehat g(X)$. Then 
we have that 
\[
\d<fT_g-\mathcal{L}_XV>=0.
\]
\end{theorem}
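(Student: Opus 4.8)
The plan is to apply the general virial relation $\d<X_G(E_L)>=0$ to the linear-in-the-fibres function $G=\widehat\alpha$ associated to the $1$-form $\alpha=\widehat g(X)$, i.e.\ the function $\widehat\alpha(v)=g(X(\tau_M(v)),v)$ singled out in the preceding theorem. The whole point is that, for this choice of $G$, the quantity $X_G(E_L)$ can be computed in closed form, because the theorem just proved tells us that $X_{\widehat\alpha}=X^c-f\,\Delta$ when $\mathcal{L}_Xg=f\,g$.

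First I would note that for a mechanical Lagrangian the potential term is $\tau_M$-basic and so does not contribute to $\theta_L=dL\circ S$; hence $\theta_L=\theta_{T_g}$ and $G=\<\theta_L,X^c>=\<\theta_{T_g},X^c>=\widehat\alpha$, using the relation $\<\theta_{T_g},X^c>=\widehat\alpha$ recalled earlier. Then I would evaluate $X_{\widehat\alpha}(E_L)=(X^c-f\,\Delta)(E_L)$ term by term. Since $E_L=T_g+\tau_M^*V$ with the potential fibre-constant and $T_g$ fibre-quadratic, $\Delta(E_L)=\Delta(T_g)=2\,T_g$. For the complete lift, $X^c(E_L)=X^c(T_g)+X^c(\tau_M^*V)$; here $X^c(T_g)=T_{\mathcal{L}_Xg}=T_{fg}=(\tau_M^*f)\,T_g$ by~(\ref{XcTg}), the linearity of $K\mapsto T_K$ and the conformal condition, while $X^c(\tau_M^*V)=\tau_M^*(XV)=\tau_M^*(\mathcal{L}_XV)$ because $X^c$ is $\tau_M$-projectable onto $X$. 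Collecting the terms gives $X_{\widehat\alpha}(E_L)=(\tau_M^*f)\,T_g+\tau_M^*(\mathcal{L}_XV)-2(\tau_M^*f)\,T_g=\tau_M^*(\mathcal{L}_XV)-(\tau_M^*f)\,T_g$. (The same identity follows without invoking the preceding theorem, from $X_G(E_L)=-\Gamma_L(G)=-X^c(L)$ in~(\ref{GLGXcL}) together with the expansion $X^c(L)=X^c(T_g)-X^c(\tau_M^*V)$ carried out in the same way.)

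Finally I would invoke the virial theorem: assuming, as throughout the paper, that $G=\widehat\alpha$ stays bounded along the solution under consideration — so that the time average exists and the boundary term dies in the limit $\tau\to\infty$, exactly as in the derivation following~(\ref{GLGXcL}) and in \cite{CFR12} — averaging the identity above yields $\d<X_{\widehat\alpha}(E_L)>=0$, i.e.\ $\d<\mathcal{L}_XV-f\,T_g>=0$, equivalently $\d<f\,T_g-\mathcal{L}_XV>=0$, which is the assertion. I do not expect any genuine obstacle here: all the conceptual work is already contained in the identity $X^c(T_g)=(\tau_M^*f)\,T_g$ for conformal Killing fields and in the structural results~(\ref{XcTg}), (\ref{GLGXcL}) and the preceding theorem; the only hypothesis that is not automatic is the boundedness of $\widehat\alpha$ on the chosen trajectory, which is the standing assumption for every virial-type statement in this work.
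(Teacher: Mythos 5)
Your proposal is correct and follows essentially the same route as the paper: it invokes the preceding theorem $X_{\widehat\alpha}=X^c-f\,\Delta$, computes $X_{\widehat\alpha}(E_L)=\tau_M^*(\mathcal{L}_XV)-(\tau_M^*f)\,T_g$ using $\Delta E_L=2T_g$ and $X^cT_g=T_{\mathcal{L}_Xg}=f\,T_g$, and then applies the virial relation $\d<X_G(E_L)>=0$. The only cosmetic difference is that you expand $X^c(E_L)$ term by term where the paper writes it as $E_{X^cL}$; the content is identical.
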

\begin{proof}  If  $\alpha=\widehat g(X)$ is the associated 1-form,
from  the relation $X_{  \widehat{\alpha}}=X^c-f\Delta$ it follows that 
\[
\{E_L,  \widehat{\alpha}\}=X_{  \widehat{\alpha}}E_L=X^cE_L-f\,\Delta E_L=E_{X^cL}-2fT_g=-fT_g+\tau_M^*({\mathcal{L}_XV}),
\]
where we have used that $E_{X^cL}=T_{\mathcal{L}_Xg}+\tau_M^*(\mathcal{L}_XV)=f\,T_g+\tau_M^*(\mathcal{L}_XV)$. 
Applying the virial theorem $\d<\{E_L,  \widehat{\alpha}\}>=0$ we obtain the result.
\end{proof}

\begin{example} 

Consider now the spherical geometry metric (\ref{gsg}) for $R=1$ and look for a conformal vector field of the form $X=X_\theta(\theta) \, \partial/\partial \theta$. From the relationships 
$$\mathcal{L}_X(d\theta^2)=2 \dot X_\theta\, d\theta^2,\qquad \mathcal{L}_X(\sin^2\theta\,d\phi^2)= 2\,\sin\theta\,\cos\theta\,X_\theta\,d\phi^2$$
we see that in order to be a conformal vector field one must have:
$$2\,\dot X_\theta=2\,\mathrm{cotan\,} \theta\,X_\theta=f(\theta),
$$
from where we obtain $X_\theta=\sin\theta $ and $f(\theta)=2\,\cos\theta$. Therefore the corresponding Virial relation  reads 
$$
\d<2\,\cos\theta\,T_g>=\d<\,\cos\theta\,(v_\theta^2+\sin^2\theta\,v_\phi^2) >=\d<\sin\theta \pd V\theta>.
$$
\end{example}

\begin{example} 

Another example with three degrees of freedom is the metric
$$ds^2=h(r)\,dr^2+r^2(d\theta^2+\sin^2\theta\, d\phi^2),         \qquad h(r)>0.
$$
If we  look for a conformal vector field of the form $X=X_r(r) \, \partial/\partial r$ we arrive to the relationship
$$\mathcal{L}_X\,g
=( \dot hX_r+2\,h\,\dot X_r)\, dr^2+2\,r\,X_r(d\theta^2+\sin^2\theta\, d\phi^2)=f\,g,
$$
and we see that in order to be a conformal vector field one must have:
$$\frac{ \dot h}hX_r+2\,\dot X_r=\frac{2X_r}r=f,
$$
from where we can conclude that  $X_r$ is a solution of the differential equation 
$$\dot X_r+\left(\frac 12 \,\frac{ \dot h}h-\frac 1r\right)X_r=0\Longrightarrow X_r=C\frac r{h^{1/2}}.
$$ and $f=2C/h^{1/2}$. In particular for $h(r)=1$, the Euclidean metric, we have the homothetic dilation vector field $X=r \, \partial/\partial r$, with $f=2$  while
 for $h(r)=r^2$ we  find the conformal vector field $X= \partial/\partial r$ with a conformal factor $f=2/r$.Therefore the corresponding Virial relations  read
$$
\d<2T_g>=\d<\mathcal{L}_XV>\Longrightarrow \d<v_r^2+r^2(v_\theta^2+\sin^2\theta\, v_\phi^2)>=\D<r \pd Vr>.
$$
and 
$$
\d<\frac 2rT_g>=\d<\mathcal{L}_XV>\Longrightarrow \d<r\left(v_r^2+v_\theta^2+\sin^2\theta\, v_\phi^2\right)>=\D<\pd Vr>.
$$
\end{example}

We can prove a similar result when we have two Riemann metrics $g$ and $g'$ on $M$ and the vector field $X\in\mathfrak{X}(M)$ relates them in the following way $\mathcal{L}_Xg=f\,g'$.

\begin{theorem} Consider a Lagrangian of mechanical type $L=L_{g,V}=T_g-\tau_M^*{V}$. If there exists a function $f\in C^\infty(M)$ such that $\mathcal{L}_Xg=f\,g'$, and $\alpha=\widehat g(X)$, then,
\[
\d<fT_{g'}-\mathcal{L}_XV>=0.
\]
\end{theorem}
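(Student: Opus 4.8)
The plan is to adapt, essentially verbatim, the proof of the previous theorem (the virial theorem for conformal Killing vector fields, i.e. the special case $g'=g$), the only genuinely new point being the evaluation on the free Lagrangian $T_g$ of the vertical vector field $Y:=X_{\widehat{\alpha}}-X^c$, which in this more general situation is no longer a multiple of the Liouville field $\Delta$.

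First, from relation (\ref{XaXc}) (and the fact, noted there, that $X_{\widehat{\alpha}}-X^c$ is vertical, since both fields project onto $X=\widehat g^{-1}(\alpha)$) we have that $Y$ is vertical and $i(Y)\omega_{T_g}=\theta_{T_{\mathcal{L}_Xg}}$. Since $\mathcal{L}_Xg=f\,g'$ with $f$ a function on the base, $T_{\mathcal{L}_Xg}=f\,T_{g'}$ and, in local coordinates, $\theta_{T_{\mathcal{L}_Xg}}=(\mathcal{L}_Xg)_{ij}v^j\,dq^i=f\,g'_{ij}v^j\,dq^i=f\,\theta_{T_{g'}}$, so that $i(Y)\omega_{T_g}=f\,\theta_{T_{g'}}$.

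Next, I would compute $\{E_L,\widehat{\alpha}\}=X_{\widehat{\alpha}}(E_L)=X^c(E_L)+Y(E_L)$. For the complete lift, using (\ref{XcTg}) and the action of $X^c$ on the basic function $\tau_M^*V$, we get $X^c(E_L)=X^c(T_g)+X^c(\tau_M^*V)=T_{\mathcal{L}_Xg}+\tau_M^*(\mathcal{L}_XV)=f\,T_{g'}+\tau_M^*(\mathcal{L}_XV)$; and since $Y$ is vertical and $\tau_M^*V$ is basic, $Y(E_L)=Y(T_g)$. The key step is then to show $Y(T_g)=-2f\,T_{g'}$. Writing $Y=Y^i\,\partial/\partial v^i$ and using the coordinate expression $\omega_{T_g}=g_{ij}\,dq^i\wedge dv^j+\cdots$, the equation $i(Y)\omega_{T_g}=f\,\theta_{T_{g'}}$ becomes $g_{ij}Y^j=-f\,g'_{ij}v^j$, whence $Y(T_g)=Y^j\,\partial T_g/\partial v^j=Y^j g_{jk}v^k=-f\,g'_{jk}v^jv^k=-2f\,T_{g'}$. (Intrinsically, one may instead use that $\omega_L=\omega_{T_g}$ and $i(\Gamma_L)\omega_{T_g}=dE_L$, so $Y(T_g)=\<dE_L,Y>=\omega_{T_g}(\Gamma_L,Y)=-f\,\<\theta_{T_{g'}},\Gamma_L>=-2f\,T_{g'}$, the last equality because $\Gamma_L$ is a second order differential equation vector field.) Collecting terms,
\[
\{E_L,\widehat{\alpha}\}=f\,T_{g'}+\tau_M^*(\mathcal{L}_XV)-2f\,T_{g'}=-\bigl(f\,T_{g'}-\tau_M^*(\mathcal{L}_XV)\bigr),
\]
and applying the virial theorem $\d<\{E_L,\widehat{\alpha}\}>=0$ (valid when $\widehat{\alpha}$ stays bounded along the trajectory) yields $\d<f\,T_{g'}-\mathcal{L}_XV>=0$, as claimed.

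The main obstacle — and really the only step that differs from the conformal Killing case — is the identity $Y(T_g)=-2f\,T_{g'}$: when $g'=g$ one has $Y=-f\,\Delta$ and the identity is immediate from $\Delta(T_g)=2T_g$, whereas in general $Y$ must be handled through its defining relation $i(Y)\omega_{T_g}=f\,\theta_{T_{g'}}$. All the remaining manipulations transcribe those of the previous theorem.
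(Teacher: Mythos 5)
Your argument is correct, but it follows a genuinely different (and longer) route than the paper's own proof of this theorem. The paper does not decompose $X_{\widehat{\alpha}}$ at all: it simply observes that $\widehat{\alpha}=\langle\theta_L,X^c\rangle$, so that relation (\ref{GLGXcL}) gives $\Gamma_L(\widehat{\alpha})=X^c(L)$, and then $\{E_L,\widehat{\alpha}\}=X_{\widehat{\alpha}}(E_L)=-\Gamma_L(\widehat{\alpha})=-X^c(T_g)+\tau_M^*(\mathcal{L}_XV)=-fT_{g'}+\tau_M^*(\mathcal{L}_XV)$, after which the virial theorem concludes. You instead transplant the structure of the proof of the \emph{preceding} theorem (the conformal Killing case), writing $X_{\widehat{\alpha}}=X^c+Y$ with $i(Y)\omega_{T_g}=\theta_{T_{\mathcal{L}_Xg}}=f\,\theta_{T_{g'}}$ and supplying the new lemma $Y(T_g)=-2fT_{g'}$; your coordinate computation of that identity (via $g_{ij}Y^j=-f\,g'_{ij}v^j$) and your intrinsic alternative (via $\omega_{T_g}(\Gamma_L,Y)$ and the second-order property of $\Gamma_L$) are both correct, and they correctly reproduce $-2fT_{g'}$ as the generalization of the $-f\,\Delta E_L=-2fT_g$ term. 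What your approach buys is an explicit description of the vertical correction $Y$ replacing $-f\Delta$ when $g'\neq g$, which is of independent interest; what it costs is an extra nontrivial step that the paper's identity $\Gamma_L(\widehat{\alpha})=X^c(L)$ renders unnecessary. Both proofs invoke the same boundedness hypothesis on $\widehat{\alpha}$ when applying $\d<\{E_L,\widehat{\alpha}\}>=0$, which you state explicitly.
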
 
\begin{proof} Since $\widehat\alpha=\<\theta_L,X^c>$,  then $\Gamma_L(\widehat\alpha)=X^c(L)$. Hence,
$$\{E_L, \widehat{\alpha}\}
=X_{ \widehat{\alpha}}E_L
=-X^c(L)
=-X^c(T_g)+\tau_M^*(\mathcal{L}_XV)
=-T_{\mathcal{L}_Xg}+\tau_M^*(\mathcal{L}_XV)
=-fT_{g'}+\tau_M^*(\mathcal{L}_XV).$$
The Virial Theorem implies that $\d<-fT_{g'}+\mathcal{L}_XV>=0$, and the result follows.
\end{proof}

\begin{example}[Spherical geometry]\label{ExConf}
In example \ref{SGeo} the vector field $X=\tan(\theta)\partial_\theta$ defines the virial function. In polar coordinates, this vector is given by
$$
X=r(1+\lambda r^2)\partial_r.
$$
The vector field $X$ is not a conformal vector field of the Euclidian metric $g'$ given by $ds^2=dr^2+r^2d\phi^2$, but $\mathcal{L}_X g=2(1+\lambda r^2)^{-1}g'$. In this case, we have $\d<2(1+\lambda r^2)^{-1}T_{g'}>=\d<\mathcal{L}_XV>$ and this formula is equivalent to (\ref{liform}).
\end{example}

A particularly interesting case is when the vector field  $X$ is homothetic, i.e.  $f=\mu$ is a real constant,
$\mathcal{L}_Xg=\mu\, g$, because then $\mathcal{L}_{X^c}T_g=\mu\, T_g$, where $T_g$ is the kinetic energy $T$.

In example \ref{ExConf}, when $\lambda\to 0$, the limit vector field is the infinitesimal generator of dilations on $\mathbb{R}^2$ written in polar coordinates, and it is a $2-$homothetic vector field of the Euclidian metric, so in the limit the Virial Theorem implies that $2\d<T_g>=\d<r\partial_r V>$. 

If $V$ is a $X$-homogeneous function of degree $\nu$, i.e. $\mathcal{L}_X V= \nu V$, then $\d<\mu T_g-\nu V>=0$ because of $\d<X^c(T_g)-\mathcal{L}_X V>=0$. Using that the energy is a constant $E$ along a trajectory we also have $\d<T_g+V>=E$, from where
\[
\d<T_g>=\frac{\nu}{\nu+\mu}E
\qquad\text{and}\qquad
\d<V>=\frac{\mu}{\nu+\mu}E.
\]

As a particular case, if both degrees of homogeneity are equal $\nu=\mu\equiv a$ then we have that 
\[
\d<T_g>=\d<V>=\frac{1}{2}E.
\]
On the other hand, this condition is equivalent to $\mathcal{L}_{X^c}L=aL$, and hence we can apply directly a result in~\cite{CFR12} obtaining  $\d<L>=0$, from where we also get $E=2\d<T_g>=2\d<V>$.

\section{Summary and outlook}

This paper aims to  go deeper in a recently developed geometric approach to the Virial Theorem and our  attention has been focused on the particularly interesting case of 
Lagrangian systems of mechanical type. Geometric properties of the Riemann structure defining the kinetic energy term allow us to identify different types of 
virial functions and associated vector fields for which a virial like type theorem can be stated. Recall that a first generalisation of virial theorem  was 
established in the framework  of  the theory of
Hamiltonian systems on symplectic manifolds and therefore for systems defined by regular Lagrangians. This is here studied in the particular case of systems of mechanical type Lagrangians. The general case is $\d<X_G(E_L)>=0$. But there are other type of virial like theorems of a specifically Lagrangian nature. For instance we have proved in Section 3 that for any complete lift vector field $X^c$ the following relation is true: $\d<X^c(T_g)-\mathcal{L}_X(V)>=0$; this is a generalisation of the case studied in \cite{CFR12} in which $X^c(L)=a\, L$. We have displayed in Section 4 the most general form of a function $G$ whose associated Hamiltonian vector field is the complete lift of a vector field $X$ on the base manifold,
which is a Killing vector field, the function $G$ being determined by the 1-form on $M$ corresponding to  $X$ by contraction with the Riemann metric. We obtain in this way the virial relation $\d<\mathcal{L}_XV>=0$. Finally we have identified the conformal Killing vectors and obtained a virial relation for such vector fields, the case of homothetic vector fields appearing as a particular case: $\d<fT_g-\mathcal{L}_XV>=0$.
Several examples have been used to illustrate the theory.

The usefulness of this geometric approach suggests us the convenience of analyzing 
the virial theorem in the framework of nonholonomic system. They are receiving  an increasing interest from the geometric viewpoint \cite{LMdD} and the use of a modern approach to the concept of quasi-velocity \cite{CNCS07} using the geometric tools of Lie algebroids points out to consider virial-like relations in Lagrangian and Hamiltonian systems on Lie algebroids \cite{EM01} when nonholonomic constraints are present \cite{CLMM}. We believe this kind of applications deserves a more detailed study.

\section*{Acknowledgments}

 This work was supported by the research projects 
 MTM--2012---33575 (MINECO, Madrid) and DGA-E24/1 (DGA, Zaragoza). We also thank 
 the anonymous referees for their valuable comments and suggestions.

\end{document}